\documentclass{article}

\usepackage[english]{babel}

\usepackage[letterpaper,top=2cm,bottom=2cm,left=3cm,right=3cm,marginparwidth=1.75cm]{geometry}

\usepackage[T1]{fontenc}
\usepackage{amsfonts,amsmath,amsthm,amssymb,mathtools}  %

\mathtoolsset{centercolon}
\usepackage{xfrac,nicefrac}
\usepackage{mathdots}
\usepackage{mleftright}  %

\usepackage{xspace}
\xspaceaddexceptions{]\}}  %
\usepackage{regexpatch}

\usepackage{bm,bbm,dsfont}  %
\usepackage{caption}
\usepackage[normalem]{ulem}
\usepackage{enumitem}

\usepackage{graphicx}
\usepackage{float}
\usepackage{subcaption}  %
\usepackage{tcolorbox}
\usepackage{tikz}
\usetikzlibrary{decorations.pathreplacing}
\usetikzlibrary{calc}
\usetikzlibrary{positioning}
\usetikzlibrary{arrows.meta}

\usepackage[linesnumbered,boxed,ruled,vlined]{algorithm2e}
\usepackage{algpseudocode}

\usepackage{thmtools,thm-restate}
\theoremstyle{plain}

\newtheorem{theorem}{Theorem}[section]  %
\newtheorem{lemma}[theorem]{Lemma}

\theoremstyle{definition}  %

\usepackage[colorlinks,citecolor=blue,linkcolor=blue,urlcolor=red]{hyperref}
\usepackage[capitalise]{cleveref}
\crefname{algocf}{Algorithm}{Algorithms}
\Crefname{algocf}{Algorithm}{Algorithms}

\DeclarePairedDelimiter{\bk}{(}{)}

\DeclarePairedDelimiter{\BK}{\{}{\}}

\DeclarePairedDelimiter{\abs}{\lvert}{\rvert}

\DeclareMathOperator{\poly}{poly}

\renewcommand{\tilde}{\widetilde}

\newcommand{\defeq}{\coloneqq}
\newcommand{\eps}{\varepsilon}

\renewcommand{\emptyset}{\varnothing}
\renewcommand{\epsilon}{\eps}

\usepackage{adjustbox}
\usepackage[numbers]{natbib}
\definecolor{darkgreen}{RGB}{0,128,0}

\title{Listing 6-Cycles}
\author{Ce Jin\thanks{\url{cejin@mit.edu}, Partially supported by NSF Grant CCF-2129139.} \\ MIT \and Virginia Vassilevska Williams\thanks{\url{virgi@mit.edu}, Supported by NSF Grants CCF-2129139 and CCF-2330048 and BSF Grant 2020356.} \\ MIT \and Renfei Zhou\thanks{\url{zhourf20@mails.tsinghua.edu.cn}} \\ Tsinghua University}
\date{}

\begin{document}
\maketitle

\begin{abstract}
Listing copies of small subgraphs (such as triangles, $4$-cycles, small cliques) in the input graph is an important and well-studied problem in algorithmic graph theory.
In this paper, we give a simple algorithm that lists $t$ (non-induced) $6$-cycles in an $n$-node undirected graph in $\tilde O(n^2+t)$ time.
This nearly matches the fastest known algorithm for detecting a $6$-cycle in $O(n^2)$ time by Yuster and Zwick (1997).
Previously, a folklore $O(n^2+t)$-time algorithm was known for the task of listing $4$-cycles.
\end{abstract}

\section{Introduction}
Listing (also called enumerating) cycles in a  graph is an important algorithmic task with 
a variety of applications  (e.g. in computational biology \cite{bio} and social networks \cite{GiscardRW17}). A line of works dating back to the 1970s developed efficient algorithms for listing all simple cycles in a (directed or undirected) input graph \cite{Tarjan73,Johnson75,ReadT75,MatetiD76,bezem1987enumeration,BirmeleFGMPRS13,Grossi16}, and some of these algorithms have been engineered in practice \cite{BlanusaIA22}. 
These algorithms consider cycles of arbitrarily large length; however, in many natural scenarios, one only cares about listing cycles whose length is a small given constant, and this will be the focus of this paper.

Formally, we are given an \emph{undirected} simple graph $G$ of $n$ nodes and $m$ edges,  and a small pattern graph $H$ (in our case, $H$ is the $k$-cycle $C_k$ for some small fixed $k$). 
The $H$-\emph{detection} problem asks whether $G$ contains a copy of $H$ as a (\emph{not necessarily induced}) subgraph.
In the $H$-\emph{listing} problem, we are additionally given a parameter $t$, and we need to output $t$ distinct copies of $H$ in $G$ (if $G$ contains fewer than $t$ copies of $H$, then we need to output all copies).

\paragraph*{Known results on $C_k$-detection and $C_k$-listing.}
The simplest cycle is a triangle ($C_3$). A long line of works in graph algorithms and fine-grained complexity has led to a complete picture of triangle listing algorithms, up to natural hardness hypotheses: 
\cite{BjorklundPWZ14} gave an algorithm for listing $t$ triangles in $\tilde O(\min\{n^2 + nt^{2/3}, m^{4/3}+mt^{1/3}\})$ time\footnote{We use $\tilde O(f)$ to denote $O(f\cdot \mathrm{polylog}(f))$.},
assuming that the matrix multiplication exponent is $\omega = 2$.
This running time is likely to be optimal:  for $t=1$ it matches the best known triangle detection algorithm in $O(m^{2\omega/(\omega+1)})$ time \cite{AlonYZ97}, and when $t$ is sufficiently large in terms of $n$ (or $m$) this running time is conditionally optimal under the 3SUM hypothesis \cite{Patrascu10,KopelowitzPP16} and even more believable hypotheses \cite{WilliamsX20}. See also \cite{DurajK0W20} for more fine-grained reductions between triangle listing and other problems.

Now we review known results for cycle length $k>3$. A common phenomenon is that finding {\em even} length cycles appears to be easier than finding odd cycles.
Finding a cycle of any fixed odd length in undirected graphs is as difficult as the problem in directed graphs (see e.g. \cite{thesis}), whereas finding even cycles is easier in undirected graphs than in directed graphs. Bondy and Simonovits \cite{bondy} showed that for any integer $k\geq 2$, any $n$ vertex graph with at least $100k n^{1+1/k}$ edges must contain a $2k$-cycle. Meanwhile, the complete bipartite graph on $n$ nodes in each partition does not contain any cycles of odd length.
This difference has been exploited to obtain faster algorithms for finding a cycle of any given even length \cite{Yuster97FindEven,DKS17CapWalk} that are faster than the known algorithms for detecting even the smallest length odd cycle, a triangle.

Let us review the previous works on the easier cycle \emph{detection} problem in more detail.
Yuster and Zwick \cite{Yuster97FindEven} gave an $O(n^2)$-time algorithm that detects  $C_{k}$ in an undirected $n$-node graph for any fixed \emph{even} integer $k$. Detecting $C_k$ when $k=O(1)$ is odd can be done in $\tilde{O}(n^\omega)$ time (e.g. via color-coding \cite{Alon95ColorCoding}).

When graph sparsity is taken into account, and the running time is in the number of edges $m$, the fastest running times are as follows.
The fastest algorithms for $C_k$-detection for  odd $k$ in terms of $m$ run in $O(m^{\omega(k+1)/(2\omega+k-1)})$ time \cite{DalirrooyfardVW21} and $O(m^{2-\frac{2}{k+1}})$ time \cite{AlonYZ97}, where the former is faster than the latter if $\omega\leq 2+2/(k-1)$; for the current best value of $\omega$ \cite{matmult-duan,matmult-new} this is the case for $k\leq 5$. 
For even $k\geq 4$, the best known running time for detecting $C_k$ is $O(m^{2-\frac{4}{k+2}})$ \cite{DKS17CapWalk,AlonYZ97}; this running time was shown to be conditionally optimal for $k\equiv 2\pmod{4}$ for algorithms that do not use fast matrix multiplication \cite{Lincoln020,DKS17CapWalk}.

While the listing variant of the problem is well understood for triangles ($C_3$), there isn't very much known for larger $k$.
The problem of $C_4$-listing was very recently independently studied by \cite{fourcycle} and \cite{JinX23}, who gave an algorithm in $\tilde O(\min\{n^2+t,m^{4/3}+t\})$ time. For $t=1$ this matches the known $C_4$-detection algorithms in $O(n^2)$ time \cite{Yuster97FindEven} and $O(m^{4/3})$ time \cite{AlonYZ97}. It was also shown to be conditionally optimal under 3SUM hypothesis independently by \cite{AbboudBF23} and \cite{JinX23} (building on \cite{AbboudBKZ22}), and later shown to be conditionally optimal under the Exact Triangle hypothesis (which is weaker than 3SUM hypothesis and APSP hypothesis) \cite{chanxu}.

\paragraph*{Our results.}
Given the previous works on the fine-grained complexity of listing triangles and $4$-cycles, a natural question is to investigate the fine-grained complexity of $C_k$-listing for larger $k$.
In this paper, we make the first step by considering $k=6$, the smallest even number after $4$. Our result is summarized in the following theorem.
\begin{theorem}[List $t$ 6-cycles]
\label{thm:mainlist}
    There is a deterministic algorithm that lists $t$ copies of $C_6$ in an $n$-node undirected simple graph in $\tilde O(n^2+t)$ time.
\end{theorem}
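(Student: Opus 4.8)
The plan is to use the decomposition of a $6$-cycle $v_1v_2v_3v_4v_5v_6$ through one of its three \emph{antipodal pairs}, say $\{v_1,v_4\}$: the cycle is exactly an unordered pair of internally vertex-disjoint paths of length $3$ joining $v_1$ and $v_4$, and conversely any two such paths with disjoint interiors yield $6$ distinct vertices forming a $C_6$. So it suffices, for every pair $\{u,w\}$ of vertices, to enumerate the internally vertex-disjoint pairs of length-$3$ $u$--$w$ paths; since a $6$-cycle would thereby be discovered up to three times, I will keep it in a canonical form (say the lexicographically least rotation/reflection of its vertex sequence) and emit it only from the canonical antipodal pair. Concretely I would maintain a hash table keyed by unordered pairs; as length-$3$ $u$--$w$ paths are generated one at a time, for each pair I keep the number of paths seen and, for every vertex $z$, the number of those paths using $z$ as an \emph{internal} vertex, so that when a new path with internal vertices $\{x,y\}$ arrives I can compute in $O(1)$ how many earlier paths are disjoint from it and then traverse an appropriate list to output exactly those cycles.

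The bottleneck is the number of length-$3$ paths generated: it can be $\omega(n^2+t)$ in the worst case, because one very high degree vertex can lie in the interior of $n^{\Omega(1)}$ pairwise-overlapping length-$3$ paths that produce almost no $6$-cycles. The fix is a degree threshold $\Delta$: call a vertex \emph{heavy} if its degree exceeds $\Delta$ and \emph{light} otherwise, so there are at most $2m/\Delta$ heavy vertices. In the main phase I generate only the length-$3$ paths both of whose internal vertices are light --- equivalently I loop over edges $\{x,y\}$ with $x,y$ light and over $u\in N(x)$, $w\in N(y)$. The number of such paths is $\sum_{\{x,y\}\text{ light}}\deg(x)\deg(y)=O(n\Delta^3)$, which is $O(n^2)$ for $\Delta$ a small enough polynomial in $n$; moreover a light vertex is the interior vertex of at most $2\Delta$ of the relevant $u$--$w$ paths, so any pair $\{u,w\}$ with at least $\Theta(\Delta)$ such paths has $\Omega(\mathrm{count}^2)$ internally disjoint pairs of them, which makes the enumeration-plus-pairing cost dominated by (output) $+\ \tilde O(n^2)$, where in the dense case I additionally use that $m\gg n^{4/3}$ already forces $\Omega((m/n^2)^{6}n^{6})$ copies of $C_6$ (the supersaturation form of the Bondy--Simonovits bound) to absorb the $\Delta$-dependent overheads. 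This phase outputs every $6$-cycle all of whose heavy vertices lie in a common antipodal pair --- in particular every $6$-cycle with at most one heavy vertex.

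It then remains to handle $6$-cycles with two or more heavy vertices not all contained in a common antipodal pair; in every such cycle two heavy vertices are at cyclic distance $1$ or $2$. For the case of two heavy vertices joined by an edge of the cycle I would anchor on that heavy edge $\{x,y\}$ and reduce to enumerating length-$3$ paths from $N(y)$ to $N(x)$, again restricting the two un-anchored interior vertices to be light; for two heavy vertices at cyclic distance $2$, say $v_1$ and $v_3$, their common neighbor $v_2$ is light, and the $6$-cycles of this shape are encoded by triangles (with distinct light witnesses) in the auxiliary graph on the heavy vertices in which $\{a,b\}$ carries weight $|N(a)\cap N(b)\cap L|$, where $L$ is the set of light vertices --- a graph with only $\tilde O(n\Delta^2)$ edges, hence sparse enough (for $\Delta$ small) to list all its triangles in $\tilde O(n^2)$ and expand each into its $6$-cycles with output-proportional overhead. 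The main obstacle is to make the whole case split fit into one $\tilde O(n^2+t)$ budget: $\Delta$ must be small enough that the heavy-anchored enumerations --- whose costs are sums of products of common-neighborhood sizes, and which need a careful (possibly supersaturation-based) bound rather than the naive walk count --- and the triangle-listing step run in $\tilde O(n^2+t)$, yet the main-phase bookkeeping must also be $\tilde O(n^2+t)$ with that same $\Delta$; and one must verify that no $6$-cycle falls between the cases and that the total time is $\tilde O(n^2+t)$ even when $G$ has far fewer than $t$ copies of $C_6$.
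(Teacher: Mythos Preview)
Your high-level decomposition---a $C_6$ is two internally vertex-disjoint length-$3$ paths between an antipodal pair---is exactly the paper's starting point (implemented via color coding into four classes $A,B,C,D$, so that the antipodal pair is $(a,d)$ and the two paths are $a\text{--}b_i\text{--}c_i\text{--}d$). The divergence is in how to avoid wasting time on pairs of paths that share an interior vertex. The paper does \emph{not} use a degree threshold at all. Instead it classifies each path $a\text{--}b\text{--}c\text{--}d$ by whether $b$ or $c$ has a \emph{replacement} (another path $a\text{--}b'\text{--}c\text{--}d$ or $a\text{--}b\text{--}c'\text{--}d$), yielding tables $P_{a,d}$ (the $b$'s with a replacement for $c$), $Q_{a,d}$ (symmetric), and $R_{a,d}$ (paths with no replacements). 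The point is that whenever two $a\text{--}d$ paths collide in $b$, one of them has its $c$ replaceable, so every $P\times Q$, $P\times P$, $R\times R$, $P\times R$ combination can be enumerated with at most a constant-factor loss to collisions; a short supersaturation argument (in the spirit you allude to) then shows the total table size is $O(n^2+t)$. This replacement trick is the missing idea in your proposal, and it cleanly eliminates the entire heavy/light case split.

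Your heavy/light route, as written, has concrete holes. The main-phase accounting actually needs $\Delta\le n^{1/4}$: once you ``traverse an appropriate list'' to extract the disjoint earlier paths you pay $O(p_{u,w})$ per new path even when no cycle is output, so the small-$p$ pairs cost $O(\Delta\cdot\sum p_{u,w})=O(n\Delta^4)$, not $O(n\Delta^3)$. More seriously, the cleanup for multiple heavy vertices is incomplete. Consider a $6$-cycle in which exactly $v_1$ and $v_3$ are heavy and $v_2,v_4,v_5,v_6$ are light. No antipodal pair has all four interior vertices light, so the main phase misses it; there is no heavy edge on the cycle, so your distance-$1$ case misses it; and your distance-$2$ triangle idea on the heavy auxiliary graph only encodes cycles where \emph{all three} of $v_1,v_3,v_5$ are heavy, so it misses it too. (Even when that triangle case applies, listing triangles in a graph with $\Theta(n\Delta^2)$ edges does not obviously fit in $\tilde O(n^2)$ combinatorially.) Finally, the reduction from ``list $t$'' to ``list all''---which you do not address---is handled in the paper by a separate binary search over induced prefixes of $V$, together with a one-vertex-anchored listing lemma for the boundary case.
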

The running time of our algorithm nearly matches the algorithm of Yuster and Zwick \cite{Yuster97FindEven} for detecting a $C_6$ in $O(n^2)$ time, for any $t\leq O(n^2)$.

\newcommand{\PathForfourcycle}{\adjustbox{valign=m, raise=-0.6mm, margin=-1mm 0 0 0}{
    \begin{tikzpicture}[scale=0.5,every node/.style={inner sep=1pt}]
      \node (b) at (1,0) {$a$};
      \node (d) at (3.4,0) {$c$};
      \node (c1) at (2.2,0.5) {$b_1$};
      \node (c2) at (2.2,-0.5) {$b_2$};
      \draw[-] (b) -- (c1) -- (d);
      \draw[-] (b) -- (c2) -- (d);
    \end{tikzpicture}}}
\newcommand{\TwoPaths}{\adjustbox{valign=m, raise=-0.6mm, margin=-1mm 0 0 0}{
    \begin{tikzpicture}[scale=0.5,every node/.style={inner sep=1pt}]
      \node (a) at (0,0) {$a$};
      \node (d) at (3.6,0) {$d$};
      \node (b1) at (1.2,0.5) {$b_1$};
      \node (b2) at (1.2,-0.5) {$b_2$};
      \node (c1) at (2.4,0.5) {$c_1$};
      \node (c2) at (2.4,-0.5) {$c_2$};
      \draw[-] (a) -- (b1) -- (c1) -- (d);
      \draw[-] (a) -- (b2) -- (c2) -- (d);
    \end{tikzpicture}}}

At a high level, our algorithm for listing 6-cycles follows a similar idea as in the folklore 4-cycle detection/listing algorithm (e.g., \cite{Yuster97FindEven,JinX23,fourcycle}), where a $4$-cycle \PathForfourcycle$\,$ is formed by pasting together two paths $a - b_1 - c$ and $a - b_2 - c$ (where $b_1\neq b_2$) which are stored in a table indexed by $(a,c)$. Analogously, our algorithm forms a 6-cycle \TwoPaths$\,$ by pasting together two paths $a-b_1-c_1-d$ and $a-b_2-c_2-d$, but the new challenge here is to avoid wasting time on  degenerate cases $b_1=b_2$ or $c_1=c_2$ which do not produce valid $6$-cycles.

We leave it as an open question to extend \cref{thm:mainlist} from $C_6$ to $C_{2k}$ for larger $k$.

\paragraph*{Further related works.}
The problem of listing short cycles was also studied in restricted graph classes such as low-arboricity graphs \cite{ChibaN85} and planar graphs \cite{Kowalik03}.

Another way to generalize the classical triangle listing results is to consider $k$-cliques of larger size $k$. A very recent work \cite{listclique} considered the problem of listing $k$-cliques for small $k$.

More generally, enumeration algorithms are widely studied in database theory, and recently there is growing interest in the fine-grained complexity of enumeration algorithms; see e.g., \cite{DurandG07, Strozecki19,durandtutorial,CarmeliK21,BringmannCM22,bringmanncarmeli,DengL023,CarmeliS23}.

In the literature of enumeration algorithms, it is common to study the \emph{delay} between outputting two consecutive answers.
For example, the 4-cycle listing algorithm of \cite{JinX23} actually achieves $O(1)$ delay after $O(\min\{n^{2},m^{4/3}\})$-time preprocessing; this is  stronger than listing $t$ $4$-cycles in $O(\min\{n^{2},m^{4/3}\} + t)$ time. Whether our $6$-cycle listing algorithm (\cref{thm:mainlist}) can be upgraded to an enumeration algorithm with $\poly \log(n)$-delay after $\tilde O(n^2)$-time preprocessing is left for future investigation.

\section{Algorithm for Listing All 6-Cycles}

\newcommand{\PathForP}{\adjustbox{valign=m, raise=-0.6mm, margin=-1mm 0 0 0}{
    \begin{tikzpicture}[scale=0.5,every node/.style={inner sep=1pt}]
      \node (a) at (0,0) {$a$};
      \node (b) at (1,0) {$b$};
      \node (d) at (3.4,0) {$d$};
      \node (c1) at (2.2,0.5) {$c_1$};
      \node (c2) at (2.2,-0.5) {$c_2$};
      \draw[-] (a) -- (b) -- (c1) -- (d);
      \draw[-] (b) -- (c2) -- (d);
    \end{tikzpicture}}}

\newcommand{\PathForQ}{\adjustbox{valign=m, raise=-0.6mm, margin=-1mm 0 0 0}{
    \begin{tikzpicture}[scale=0.5,every node/.style={inner sep=1pt}]
      \node (a) at (0,0) {$a$};
      \node (d) at (3.4,0) {$d$};
      \node (b1) at (1.2,0.5) {$b_1$};
      \node (b2) at (1.2,-0.5) {$b_2$};
      \node (c) at (2.4,0) {$c$};
      \draw[-] (a) -- (b1) -- (c) -- (d);
      \draw[-] (a) -- (b2) -- (c);
    \end{tikzpicture}}}

Let $G = (V, E)$ be an undirected graph with $n$ nodes.
We first focus on a slightly easier variant of the 6-cycle listing problem which asks to list \emph{all} 6-cycles in $G$. Our algorithm for this variant is summarized in the following theorem.
\begin{theorem}[List all 6-cycles]
  \label{thm:all}
  There is a deterministic algorithm that lists all 6-cycles of the $n$-node input graph $G$ in $O\bk[\big]{(n^2 + t) \log n}$ time, where $t$ denotes the total number of 6-cycles in $G$.
\end{theorem}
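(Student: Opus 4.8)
The plan is to follow the antipodal decomposition sketched in the introduction: a $6$-cycle $\TwoPaths$ with cyclic vertex order $a,b_1,c_1,d,c_2,b_2$ is the union of two internally vertex-disjoint length-$3$ paths $a-b_1-c_1-d$ and $a-b_2-c_2-d$ joining an antipodal pair $(a,d)$, and each $6$-cycle has exactly three antipodal pairs. So I would iterate over all ordered pairs $(a,d)$, list all unordered pairs of internally disjoint length-$3$ $a$--$d$ paths, and deduplicate by emitting only the lexicographically smallest rotation/reflection of each cycle's vertex sequence, so that each $6$-cycle is output $O(1)$ times. The whole problem thus reduces to: for each $(a,d)$, with $\mathcal P(a,d)$ the set of length-$3$ $a$--$d$ paths, output all pairs $\{(b_1,c_1),(b_2,c_2)\}\subseteq\mathcal P(a,d)$ with $\{b_1,c_1\}\cap\{b_2,c_2\}=\emptyset$, in total time nearly linear in the number of such pairs plus an $\tilde O(n^2)$ overhead.

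First I would build, for every $(a,d)$, a compact representation of $\mathcal P(a,d)$: for each first internal vertex $b\in N(a)$ a sorted list of admissible third vertices $S_{a,b,d}=(N(b)\cap N(d))\setminus\{a,b,d\}$, the symmetric ``dual'' grouping by the last internal vertex $c$, and the cardinalities of all these lists (so membership in $\mathcal P(a,d)$ is answered in $O(\log n)$). Then, for each $(a,d)$, I enumerate valid path pairs by scanning over the length-$3$ paths $a-b_1-c_1-d$: the valid partners of a fixed one are exactly the paths whose first vertex is outside $\{b_1,c_1\}$ and whose last vertex is outside $\{b_1,c_1\}$. The key point is that from the stored cardinalities I can decide in $O(\log n)$ whether this partner set is nonempty — it is $|\mathcal P(a,d)|$ minus the sizes of at most four ``forbidden slices'' (first-vertex groups $b_1,c_1$ and last-vertex groups $b_1,c_1$), corrected by a trivial inclusion--exclusion — and if it is empty I discard the source path without ever touching a partner. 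Otherwise I walk $\mathcal P(a,d)$ grouped by first vertex, binary-searching past the whole groups $b_1$ and $c_1$ and, inside each remaining group, past the (at most two) entries whose last vertex is $b_1$ or $c_1$, and emit a $6$-cycle for every surviving entry.

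The degenerate configurations this must avoid paying for are precisely the two drawn at the start of this section: two paths sharing their first internal vertex ($b_1=b_2$) form the ``$4$-cycle with a pendant edge'' $\PathForP$, two paths sharing their last internal vertex ($c_1=c_2$) form the mirror configuration $\PathForQ$, and the two ``crossing'' cases $b_1=c_2$ and $b_2=c_1$ are bowtie subgraphs; these are exactly what the whole-group skips and the entry skips remove. Maintaining the lists as sorted arrays (or balanced trees) makes each skip cost $O(\log n)$, which is the only source of the logarithmic factor; every other unit of work is charged to an output $6$-cycle, to one of the $O(n^2)$ pairs $(a,d)$, or to one of the $O(n^2)$ vertex pairs examined in preprocessing.

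The hard part will be the running-time analysis, which reduces to two combinatorial estimates: that $\sum_{a,d}|\mathcal P(a,d)|$ — the number of length-$3$ paths in $G$ — is $O(n^2+t)$, and that the total time the scan spends skipping $\PathForP$/$\PathForQ$/bowtie configurations, which (thanks to the $O(\log n)$ nonempty-partners pre-test) is only ever incurred for source paths that actually lie in some $6$-cycle, is $\tilde O(n^2+t)$. Both are instances of one principle: a cluster of near-degenerate path pairs around an antipodal pair or a vertex forces a proportional number of honest $6$-cycles — e.g., if $a-b_1-c_1-d$ has a valid partner $a-b_2-c_2-d$, then all but $O(1)$ of the vertices $b$ with $a\sim b\sim c_1\sim d$ yield the genuine, distinct $6$-cycle $a-b-c_1-d-c_2-b_2-a$. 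Turning this principle into tight enough inequalities, and verifying that the pre-test really does prevent any super-constant amortized waste on a fruitless source path, is the crux; correctness itself — each valid pair produced once per antipodal pair, hence each $6$-cycle $O(1)$ times, then deduplicated — is then routine.
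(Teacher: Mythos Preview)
Your plan is genuinely different from the paper's: you work directly on the uncolored graph, store $\mathcal P(a,d)$ explicitly, and try to amortize the cost of skipping degenerate pairs against output $6$-cycles. The paper instead color-codes into four parts $A,B,C,D$ and, crucially, \emph{classifies} each length-$3$ path $a\text{--}b\text{--}c\text{--}d$ into tables $P_{a,d},Q_{a,d},R_{a,d}$ according to whether $b$ or $c$ admits a replacement ($|N_{b,d}|\ge 2$ or $|N_{a,c}|\ge 2$). This classification is the whole point: in every enumeration case ($P\times Q$, $P\times P$, $R\times R$, $P\times R$) the built-in ``$\ge 2$'' conditions guarantee that at least a constant fraction of enumerated pairs are valid $6$-cycles, so there is no skip overhead to amortize at all. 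The only global combinatorial bound needed is then on the \emph{table sizes}, handled by one clean lemma (many length-$2$ paths from a fixed vertex force many $6$-cycles through it).

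The gap in your proposal is the second ``hard estimate'': bounding the time spent skipping $\PathForQ$/$\PathForP$/bowtie configurations. Your charging scheme --- for a source $(b_1,c_1)$ with partner $(b_2,c_2)$, charge each skipped entry $(b,c_1)$ to the $6$-cycle $a\text{--}b\text{--}c_1\text{--}d\text{--}c_2\text{--}b_2$ --- over-counts badly. If $N(a)\cap N(c_1)$ has $k$ vertices $b_1,\dots,b_k$ and there is exactly one other path $(b_{k+1},c')$, then every source $(b_i,c_1)$ ($i\le k$) has the same unique partner, and each of them skips the same $k-1$ entries, all charged to the \emph{same} $k$ cycles; the charge-to-cycle ratio is $\Theta(k)$, not $O(1)$. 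In that particular gadget you are rescued by $n\approx k$ so the $n^2$ term absorbs the $k^2$ overhead, but your argument does not show this; establishing that the total skip cost is $\tilde O(n^2+t)$ for all graphs would require a separate supersaturation-style inequality at least as delicate as the paper's Lemma~\ref{clm:saturation}, and it is exactly this inequality that the $P/Q/R$ decomposition is designed to make unnecessary. (A smaller point: your preprocessing is not ``$O(n^2)$ vertex pairs'' --- building all $S_{a,b,d}$ already costs $\sum_{a,d}|\mathcal P(a,d)|$, i.e.\ your unproven claim (1), so everything hinges on those two estimates.)
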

We will show how \cref{thm:all} implies our main \cref{thm:mainlist} (which has a given parameter $t$ possibly much smaller than the total cycle count) in \cref{sec:listt}.

To prove \cref{thm:all}, we consider a color-coded version of the problem: each node in $G$ receives one out of four possible colors, 
and thus the node set is partitioned into four color classes $V = A \sqcup B \sqcup C \sqcup D$.

Then, our task is to list all 6-cycles \TwoPaths, where $a \in A$, $b_1 \ne b_2 \in B$, $c_1 \ne c_2 \in C$, and $d \in D$. 
Solving this task is the main part of our algorithm,  as summarized by the following lemma:
\begin{lemma}
  \label{lm:subroutine}
  Given a 4-partite undirected graph $G=(V,E)$ where $V=A \sqcup B \sqcup C \sqcup D$,
  we can list all 6-cycles whose 6 nodes are in $A, B, C, D, C, B$ respectively in the order they appear on the cycle, in $O(n^2 + t)$ total time, where $t$ denotes the total number of 6-cycles in $G$.
\end{lemma}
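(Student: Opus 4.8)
The plan is to build the 6-cycle $a - b_1 - c_1 - d - c_2 - b_2 - a$ by pasting two length-3 paths $a - b_1 - c_1 - d$ and $a - b_2 - c_2 - d$ that share only their endpoints $a \in A$ and $d \in D$, and crucially enforce $b_1 \ne b_2$ and $c_1 \ne c_2$ without paying for the degenerate collisions. The first step is a preprocessing phase: for every pair $(a,d) \in A \times D$, compute the set $P(a,d)$ of all length-3 paths from $a$ to $d$ of the form $a - b - c - d$ with $b \in B$, $c \in C$; this can be done in $O(n^2)$ total time because the total number of such paths, summed over all $(a,d)$, is at most the number of (directed) paths $a-b-c-d$, and each such path contributes to exactly one $(a,d)$ — wait, that sum can be as large as $\Theta(n^4)$, so instead I would \emph{not} list all paths but rather, for each $(a,d)$, store a compact representation, e.g. for each $b \in B$ adjacent to $a$ the list of common neighbors $c \in C$ of $b$ and $d$. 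The key observation is that $P(a,d)$ can be huge, but the number of 6-cycles it can be paired into is governed by $|P(a,d)|^2$ minus the degenerate pairs, so we need the output-sensitivity argument to kick in.

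Here is the cleaner approach I would actually carry out. Iterate over all pairs $(a,d) \in A \times D$. For this fixed pair, consider the bipartite ``path graph'' $H_{a,d}$ on vertex sets $B' = \{b \in B : b \sim a\}$ and $C' = \{c \in C : c \sim d\}$, with an edge $(b,c)$ whenever $b \sim c$ in $G$; every edge of $H_{a,d}$ is exactly a path $a - b - c - d$. A 6-cycle through $a$ and $d$ (of the prescribed color pattern) corresponds to a pair of edges $(b_1,c_1), (b_2,c_2)$ of $H_{a,d}$ with $b_1 \ne b_2$ and $c_1 \ne c_2$ — i.e., a \emph{matching of size $2$} in $H_{a,d}$, counted with the ordering that makes it a 6-cycle (each such 6-cycle is generated from exactly one $(a,d)$, since $a$ is the unique $A$-vertex and $d$ the unique $D$-vertex on it, and from a bounded number of orderings). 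So the task reduces to: given a bipartite graph $H_{a,d}$ with $e_{a,d}$ edges, list all its $2$-matchings in time $O(e_{a,d} + (\#\text{2-matchings}))$. Since $\sum_{a,d} e_{a,d} \le \sum_{a,d} \deg_G(a) \cdot (\text{stuff}) = O(n \cdot |E|) = O(n^3)$ in general — this is still too slow, so I need the bound $\sum_{(a,d)} e_{a,d} = O(n^2)$, which holds because $e_{a,d} \le \sum_{b \sim a}\deg_C(b)$ and $\sum_{a,d}$ of that is $n \cdot \sum_b \deg_B^{\text{into }A}(b)\deg_C(b)$; to make this genuinely $O(n^2)$ one restricts attention to pairs $(a,d)$ where $H_{a,d}$ is nonempty and amortizes — this bookkeeping is exactly where the real work lies.

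Concretely, for each $(a,d)$ I would list the $2$-matchings of $H_{a,d}$ as follows: enumerate edges $(b_1,c_1)$ of $H_{a,d}$; for each, we want to output all edges $(b_2,c_2)$ with $b_2 \ne b_1, c_2 \ne c_1$. The number of \emph{bad} partners (those sharing $b_1$ or sharing $c_1$) is $\deg_{H}(b_1) + \deg_H(c_1) - 1$, so if I iterate over all edges and subtract, the total wasted work is $\sum_{(b,c) \in H}(\deg_H(b) + \deg_H(c)) = \sum_b \deg_H(b)^2 + \sum_c \deg_H(c)^2$, which is \emph{not} bounded by $O(e_H + \text{output})$ in the worst case (a star has $e_H$ edges, $0$ output 2-matchings sharing the center, but $\sum \deg^2 = e_H^2$). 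This is the main obstacle, and the standard fix — which I expect the authors use — is: for a fixed $(a,d)$, if the path graph $H_{a,d}$ has \emph{no} $2$-matching at all it is a star (all edges share a $B$-vertex or all share a $C$-vertex), detectable in $O(e_H)$ time, and we skip it; otherwise it has a $2$-matching, and one shows that in a star-free bipartite graph the number of $2$-matchings is $\Omega(e_H^2 / \text{polylog})$ or at least $\Omega(e_H)$, so that a direct double loop with an $O(1)$-per-pair validity check runs in time $O(e_H^2) = O(e_H + \text{output} \cdot \text{polylog})$. Combined with the global bound $\sum_{(a,d)} e_{a,d} = O(n^2)$ (each edge of each $H_{a,d}$ is a triple $(a,b,c)$ or $(b,c,d)$ with the fourth vertex a neighbor, and a careful count using that $G$ is $4$-partite gives $O(n^2)$ after discarding heavy pairs), and with each 6-cycle produced $O(1)$ times, we get the claimed $O(n^2 + t)$ bound. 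I would spend most of the write-up making the star-free amortization and the $O(n^2)$ edge-count rigorous; the pasting and deduplication are routine.
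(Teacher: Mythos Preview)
Your framing is right---for each $(a,d)$ the task is to list the $2$-matchings of the bipartite ``path graph'' $H_{a,d}$---and you correctly isolate the star obstacle. But your proposed fix does not work. The claim that a non-star $H$ has $\Omega(e_H^2/\mathrm{polylog})$ two-matchings is false: take a star on $e-1$ edges centered at some $b_0$ plus one extra edge $(b_1,c_1)$ disjoint from it. This $H$ is not a star, has $e$ edges, and has exactly $\Theta(e)$ two-matchings, yet your ``direct double loop'' spends $\Theta(e^2)$ time. So the amortization $O(e_H^2)=O(e_H+\text{output})$ breaks. Separately, the global bound $\sum_{a,d} e_{a,d}=O(n^2)$ is not true (this sum is the number of $3$-paths $a\text{--}b\text{--}c\text{--}d$, which can be $\Theta(n^3)$ already), and you never establish even $O(n^2+t)$ for it; the hand-wave about ``discarding heavy pairs'' is exactly the part that needs a real argument.

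The paper avoids both issues by \emph{not} enumerating $3$-paths directly. Instead it classifies each path $a\text{--}b\text{--}c\text{--}d$ by replaceability: put it in $P_{a,d}$ (indexed by $b$) if $c$ has an alternative, i.e.\ $|N_{b,d}|\ge 2$; in $Q_{a,d}$ (indexed by $c$) if $b$ has an alternative, i.e.\ $|N_{a,c}|\ge 2$; and in $R_{a,d}$ (as the pair $(b,c)$) if neither has an alternative. Two $R$-paths automatically have distinct $b$'s and $c$'s, so pairing them is free; for $P$--$P$, $Q$--$Q$, $P$--$Q$ pairings, the ``$\ge 2$'' guarantees that at least a constant fraction of the inner-loop iterations produce a valid $6$-cycle, which is exactly what your near-star example lacks. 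The quantities that get bounded are the \emph{table sizes} $\sum|P_{a,d}|$, $\sum|Q_{a,d}|$, $\sum|R_{a,d}|$, and $\sum|N_{a,c}|$---not the $3$-path count---and the last of these is controlled by a saturation lemma (if $\sum_c|N_{a,c}|\ge 100n+k$ then there are $\ge k$ six-cycles through $a$). That replaceability decomposition is the missing idea in your sketch.
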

We stress that in the statement of \cref{lm:subroutine}, $t$ is the total number of \emph{all} 6-cycles in $G$, not just those following the specified color pattern $A, B, C, D, C, B$.

Note that \cref{lm:subroutine} implies \cref{thm:all} by the standard color-coding technique \cite{Alon95ColorCoding}: if we color each node in $G$ with one of the four colors chosen independently at random, then any fixed 6-cycle in $G$ satisfies the specified color pattern (and hence will be reported by \cref{lm:subroutine}) with probability $\ge 4^{-6} = \Omega(1)$, so repeating $O(\log n)$ rounds suffices to report all $6$-cycles of $G$ with $1-1/\poly(n)$ success probability and total running time $O\bk[\big]{(n^2 + t) \log n}$. This color-coding can also be derandomized using the perfect hashing technique described in \cite[Section 4]{Alon95ColorCoding}, with the same asymptotic time complexity.

The rest of this section is devoted to the proof of \cref{lm:subroutine}.
Our algorithm runs in two stages: in the first stage (\cref{subsec:stage1}) we preprocess several tables, and in the second stage (\cref{subsec:stage2}) we report $6$-cycles based on the information in these tables.
Then \cref{subsec:bound} contains the key argument for bounding the time complexity of the algorithm.

\subsection{Stage I: Compute Tables}
\label{subsec:stage1}
We are given an input graph $G=(V,E)$ where $V=A \sqcup B \sqcup C \sqcup D$.
We use $t$ to denote the total number of 6-cycles in $G$.
By convention, we will use lowercase letters $a, b, c, d$ to denote nodes in $A, B, C, D$ respectively, if not otherwise stated.
We use $N_x:= \{u: (u,v)\in E\}$ to denote the set of neighbors of node $x\in V$.

We define the following tables:
\begin{itemize}
\item $N_{a, c}$ is the set of common neighbors $b \in B$ of $a$ and $c$. Formally, $N_{a, c} \defeq N_a \cap N_c \cap B$. Similarly, we define another table $N_{b, d} \defeq N_b \cap N_d \cap C$.
\item Let $P_{a, d}$ be the set of $b \in N_a \cap B$ such that $|N_{b,d}|\ge 2$, that is, the set of $b$ that can form the shape \PathForP. Similarly, let $Q_{a, d}$ be the set of $c \in N_d \cap C$ such that $|N_{a,c}|\ge 2$,  that is, the set of $c$  that can form \PathForQ.
\item Let $R_{a, d}$ be the set of edges $(b, c)\in E\cap ((B\cap N_a)\times (C\cap N_d))$ such that
  $|N_{a,c}|=|N_{b,d}|=1$. In other words, 
  there is a path $a - b - c - d$ without \emph{replacements} of $b$ and $c$ (there does not exist any other path $a - b' - c - d$ or $a - b - c' - d$ with $b' \ne b$ or $c' \ne c$).
\end{itemize}

Each of these tables $N,P,Q,R$ consists of $O(n^2)$ entries, where each entry is a set of nodes (or node pairs). We say the size of the table is the total number of nodes (or node pairs) in all its entries. The first stage of our algorithm is to compute all these tables in $O(n^2 + \textup{table size})$ time, shown in \cref{alg:compute_tables}. (Later in \cref{subsec:bound}  we will show that the total size of the tables is also $O(n^2 + t)$.)

\begin{algorithm}[htbp]
  \caption{Compute the tables}
  \label{alg:compute_tables}
  \DontPrintSemicolon
  
  \SetKwProg{Fn}{Function}{:}{}

  Let $N_{a, c} = N_{b, d} = \emptyset$ for all $a \in A$, $b \in B$, $c \in C$, $d \in D$\;
  Let $P_{a, d} = Q_{a, d} = R_{a, d} = \emptyset$ for all $a \in A$, $d \in D$\;
  \For(\Comment{Compute tables $N, P, Q$}){$(b, c) \in E \cap (B \times C)$} {
    \For{$a \in N_b \cap A$} {
      Insert $b$ to $N_{a, c}$ \hspace{5.4cm}\Comment{$b$ is a common neighbor of $a$ and $c$}\;\label{line:insert1}
      \If(\Comment{Found two paths $a-b-c$, $a-b'-c$ ($b'\neq b$)}){$\abs{N_{a, c}} = 2$} {
        Insert $c$ to $Q_{a, d}$ for all $d \in N_c \cap D$\;\label{line:insert2}
      }
    }
    \For(\Comment{Symmetric to the above}){$d \in N_c \cap D$} {
      Insert $c$ to $N_{b, d}$\;\label{line:insert3}
      \If{$\abs{N_{b, d}} = 2$} {
        Insert $b$ to $P_{a, d}$ for all $a \in N_b \cap A$\;\label{line:insert4}
      }
    }
  }
  \For(\Comment{Compute table $R$}){$(b, c) \in E \cap (B \times C)$} {
    Compute $S_a \defeq \BK{a \in N_b \cap A : \abs{N_{a, c}} = 1}$\;\label{line:S_a}
    Compute $S_d \defeq \BK{d \in N_c \cap D : \abs{N_{b, d}} = 1}$\;\label{line:S_b}
    \For{$(a, d) \in S_a \times S_d$} {
      Insert $(b, c)$ to $R_{a, d}$\;\label{line:insert5}
    }
  }
\end{algorithm}

It is straightforward to verify that \cref{alg:compute_tables}  correctly computes all the tables $N,P,Q,R$ according to their definitions. To analyze its time complexity, we only need to notice that every time \cref{line:insert1,line:insert2,line:insert3,line:insert4,line:insert5} are executed, there will be a new element inserted to some table entry, so the running time of these lines add up to $O(\textup{table size})$. The time consumed by \cref{line:S_a,line:S_b} is bounded by the number of paths $a-b-c$ and $b-c-d$ in the graph, which equals the total size of the tables $N_{a, c}$ and $N_{b, d}$, which is again $O(\textup{table size})$. The other lines of \cref{alg:compute_tables} take $O(\abs{E}) = O(n^2)$ time.

\newpage
\subsection{Stage II: Report 6-Cycles}
\label{subsec:stage2}
In the second stage, we use the precomputed tables $N,P,Q,R$ to help us report all 6-cycles of the form \TwoPaths, where $a \in A$, $b_1 \ne b_2 \in B$, $c_1 \ne c_2 \in C$, and $d \in D$.

We think of each element $b \in P_{a, d}$ as representing a collection of paths $a-b-(?)-d$ that share the same $b$ (where the $?$ mark can be any node in $N_{b,d}$), and similarly each $c \in Q_{a, d}$ represents a collection of paths sharing $c$, and each $(b, c) \in R_{a, d}$ represents a single path from $a$ to $d$.
Every possible path $a-b-c-d$ is contained in at least one of these three tables: if there is a replacement for $b$ or $c$, then this path appears in $Q_{a,d}$ and/or $P_{a,d}$; otherwise, this path is included in $R_{a,d}$.
For a desired 6-cycle \TwoPaths, we make a case distinction according to which tables the two paths $a - b_1 - c_1 - d$ and $a - b_2 - c_2 - d$ belong to:
\begin{enumerate}
\item One path belongs to $P$, and the other path belongs to $Q$.
\item Both paths belong to $P$ (or both paths belong to $Q$).
\item Both paths belong to $R$.
\item One path belongs to $P$ (or $Q$), while the other belongs to $R$.
\end{enumerate}
In the following, we separately consider each case and describe our algorithm for listing all the 6-cycles in that case.
(These cases are not disjoint, but this will only cause each cycle to be reported $O(1)$ times, which effectively blows up the total time complexity by a constant factor.) 

\paragraph*{Case 1.}
For every $a$ and $d$, we want to report all 6-cycles formed by pasting together two paths $a-b_1-c_1-d$ and $a-b_2-c_2-d$ that belong to $P$ and $Q$ respectively. To do this, we iterate over all $b_1 \in P_{a, d}$ and $c_2 \in Q_{a, d}$, and report all possible choices of $c_1 \in N_{b_1, d}$ and $b_2 \in N_{a, c_2}$ such that $c_1\neq c_2$ and $b_1 \neq b_2$.  The pseudocode is given in \cref{alg:report_case1}. It is clear that \cref{alg:report_case1} correctly lists all desired 6-cycles of Case 1.

\begin{algorithm}[htbp]
  \caption{Report 6-Cycles of Case 1}
  \label{alg:report_case1}
  \DontPrintSemicolon

  \For{each $a \in A$, $d \in D$ such that $|P_{a,d}|\ge 1$ and $|Q_{a,d}|\ge 1$} {\label{line:case1/loop1}
    \For{each $b_1 \in P_{a, d}$ and $c_2 \in Q_{a, d}$} {\label{line:case1/loop2}
      \For(\Comment{List cycles given $a, d, b_1, c_2$}){each $c_1 \in N_{b_1, d}$ and $b_2 \in N_{a, c_2}$} {\label{line:case1/inner_loop_start}
        \If{$b_1 \ne b_2$ and $c_1 \ne c_2$} {
          Report cycle \TwoPaths\;\label{line:case1/inner_loop_end}
        }
      }
    }
  }
\end{algorithm}

Next, we analyze the time complexity of \cref{alg:report_case1}. First, the number of iterations of the for-loop on \cref{line:case1/loop1} is bounded by $O(n^2)$. We then show that, for every $O(1)$ time we spend on the for-loop on \cref{line:case1/inner_loop_start} to~\ref{line:case1/inner_loop_end}, we can report one 6-cycle, thus the time complexity for this part is $O(t)$.
At every time we enter this inner for-loop on \cref{line:case1/inner_loop_start}, the time we spend equals $O(|N_{b_1, d}| \cdot |N_{a, c_2}|)$, and the number of 6-cycles we report equals the number of $(c_1, b_2)\in N_{b_1,d}\times N_{a,c_2}$ such that $b_1 \ne b_2$, $c_1 \ne c_2$, which is 
\[
  \abs*{\bk[\big]{N_{b_1, d} \setminus \BK{c_2}} \times \bk[\big]{N_{a, c_2} \setminus \BK{b_1}}} \ge (|N_{b_1, d}| - 1)(|N_{a, c_2}| - 1) \ge \frac{1}{4} |N_{b_1, d}| \cdot |N_{a, c_2}|,
\]
where the last inequality uses the fact that $|N_{b_1, d}|, |N_{a, c_2}| \ge 2$ due to the definition of $P, Q$. This implies the desired time complexity $O(n^2 +  t)$.

\paragraph*{Case 2.}
The algorithm for Case 2 is similar, see \cref{alg:report_case2}. (We only give the pseudocode for cycles whose both paths belong to $P$; the case for $Q$ is symmetric.)

\begin{algorithm}[htbp]
  \caption{Report 6-Cycles of Case 2}
  \label{alg:report_case2}
  \DontPrintSemicolon

  \For{each $a \in A$, $d \in D$ such that $|P_{a,d}|\ge 2$} {\label{line:case2/loop1}
    \For{each $b_1, b_2 \in P_{a, d}$ where $b_1 \ne b_2$} {\label{line:case2/loop2}
      \For(\Comment{List cycles given $a, d, b_1, b_2$}){each $c_1 \in N_{b_1, d}$ and $c_2 \in N_{b_2, d}$} {\label{line:case2/inner_loop_start}
        \If{$c_1 \ne c_2$} {
          Report cycle \TwoPaths\;\label{line:case2/inner_loop_end}
        }
      }
    }
  }
\end{algorithm}

Similar to Case 1, the correctness is clear, and we only need to show that every $O(1)$ time spent on \cref{line:case2/inner_loop_start} to~\ref{line:case2/inner_loop_end} will report a 6-cycle. At every time we enter the inner for-loop, we spend $O(|N_{b_1, d}| \cdot |N_{b_2, d}|)$ time, and the number of reported cycles equals the number of $(c_1, c_2)\in N_{b_1,d}\times N_{b_2,d}$ such that $c_1 \ne c_2$, which is
\[
  |N_{b_1, d}| \cdot |N_{b_2, d}| - |N_{b_1, d} \cap N_{b_2, d}| \ge |N_{b_1, d}| \cdot (|N_{b_2, d}| - 1) \ge \frac{1}{2} |N_{b_1, d}| \cdot |N_{b_2, d}|,
\]
where we again used the fact that $|N_{b_2,d}|\ge 2$ due to the definition of $P$.
Therefore, the time consumption on the inner loop is $O(t)$, and the total time complexity is the same as Case 1, $O(n^2 +  t)$.

\paragraph*{Case 3.}  For Case 3, where both paths belong to $R$, we use a straightforward enumeration as shown in \cref{alg:report_case3}.

\begin{algorithm}[htbp]
  \caption{Report 6-Cycles of Case 3}
  \label{alg:report_case3}
  \DontPrintSemicolon

  \For{each $a \in A$, $d \in D$ such that $|R_{a,d}|\ge 2$} {\label{line:case3/loop1}
    \For{each $(b_1, c_1), (b_2,c_2) \in R_{a, d}$ where $(b_1, c_1)\neq (b_2,c_2)$} {\label{line:case3/loop2}
      Report cycle \TwoPaths\;\label{line:case3/inner_loop_end}
    }
  }
\end{algorithm}
We need to show that the cycles reported by \cref{alg:report_case3} are valid.
Suppose to the contrary that $b_1=b_2$, then $c_1\neq c_2$, and hence $|N_{b_1,d}|\ge 2$. Since $(b_1,c_1)\in R_{a,d}$, this contradicts the definition of $R_{a,d}$. Hence, we must have $b_1\neq b_2$, and similarly $c_1\neq c_2$, so the reported 6-cycle is valid. Clearly, the running time of \cref{alg:report_case3} is also $O(n^2+t)$.

\paragraph*{Case 4.} Case 4 is similar to Case 3 where we use the fact that paths in $R$ have no replacements for nodes $b$ and $c$, thus $b_1 = b_2$ or $c_1 = c_2$ will never happen in the straightforward enumeration. We omit the pseudocode. The time complexity of this case is $O(n^2+t)$ as before.

\bigskip
To summarize, we have shown that our stage 2 algorithm lists all desired 6-cycles in $O(n^2+t)$ total time.
\subsection{Bounding the Total Table Size}
\label{subsec:bound}
We have shown that stage 1 runs in $O(n^2 + \textup{table size})$ time and stage 2 runs in $O(n^2 + t)$ time.
To bound the total running time, it remains to show that the total size of the tables $P,Q,R,N$ does not exceed $O(n^2 + t)$.

First, we argue that the sizes of tables $P, Q, R$ are $O(n^2 + t)$.
For any fixed $a,d$, when the set $P_{a, d}$ contains $x \ge 2$ elements, it produces at least $\binom{x}{2}\ge x - 1$ valid 6-cycles as we have seen in Case 2 of the proof in \cref{subsec:stage2}. So $t\ge \sum_{a,d}(|P_{a,d}|-1) \ge (\sum_{a,d}|P_{a,d}|) -n^2$.
A similar argument applies to tables $Q$ (Case 2 in \cref{subsec:stage2}) and $R$ (Case 3 in \cref{subsec:stage2}) as well. Hence, the total size of $P,Q,R$ is $O(n^2+t)$.

It remains to analyze the size of table $N$. Our analysis on the sizes of $N_{a, c}$ (and symmetrically, $N_{b, d}$) relies on the following lemma:
\begin{lemma}
  \label{clm:saturation}
  If node $a \in A$ satisfies $\sum_{c \in C} |N_{a, c}| \ge 100n + k$, then $G$ contains at least $k$ 6-cycles in which $a$ is the only node in $A$.
\end{lemma}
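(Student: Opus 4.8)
The plan is to carry out the argument inside the bipartite graph $\mathcal{G}$ whose two sides are $\beta\defeq N_a\cap B$ and $C$, and whose edge set consists of the edges of $G$ joining $\beta$ to $C$. By the definition of the tables $N_{\cdot,\cdot}$ we have $e(\mathcal{G})=\sum_{b\in\beta}|N_b\cap C|=\sum_{c\in C}|N_{a,c}|\ge 100n+k$, whereas $\mathcal{G}$ has at most $|B|+|C|\le n$ vertices; so $\mathcal{G}$ is very far from being a forest. The point will be that $\mathcal{G}$ then contains many length-$4$ paths $b_1 - c_1 - b_2 - c_2 - b_3$ with $b_1,b_2,b_3\in\beta$ and $c_1,c_2\in C$, and each such path closes up --- using the edges $ab_1$ and $ab_3$, which exist because $b_1,b_3\in N_a$ --- into a $6$-cycle $a - b_1 - c_1 - b_2 - c_2 - b_3 - a$ whose only node in $A$ is $a$.

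First I would pass to a subgraph of large minimum degree: let $\mathcal{G}^{*}$ be the $3$-core of $\mathcal{G}$, obtained by repeatedly deleting any vertex of current degree at most $2$. Each deletion destroys at most $2$ edges and there are at most $n$ deletions, so $e(\mathcal{G}^{*})\ge e(\mathcal{G})-2n\ge 98n+k>0$; thus $\mathcal{G}^{*}$ is nonempty, has minimum degree at least $3$, and (being a subgraph of $\mathcal{G}$) still has every edge joining $\beta$ to $C$.

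Next I would count the paths greedily. Starting from any edge $b_1c_1$ of $\mathcal{G}^{*}$, oriented so that $b_1\in\beta$, extend it to $b_2\in N_{\mathcal{G}^{*}}(c_1)\setminus\{b_1\}$ (at least $2$ choices), then to $c_2\in N_{\mathcal{G}^{*}}(b_2)\setminus\{c_1\}$ (at least $2$ choices), then to $b_3\in N_{\mathcal{G}^{*}}(c_2)\setminus\{b_1,b_2\}$ (at least $1$ choice). The five vertices $b_1,c_1,b_2,c_2,b_3$ are then pairwise distinct, with $b_1,b_2,b_3\in\beta$. Summing over the $e(\mathcal{G}^{*})$ oriented starting edges produces at least $4\,e(\mathcal{G}^{*})$ such oriented length-$4$ paths, and every undirected path of this form can be produced only by starting from one of its two end-edges, hence at most twice; so $\mathcal{G}^{*}$ --- and therefore $G$ --- contains at least $2\,e(\mathcal{G}^{*})$ distinct undirected length-$4$ paths $b_1 - c_1 - b_2 - c_2 - b_3$ with $b_1,b_2,b_3\in\beta$. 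Each gives the $6$-cycle $a - b_1 - c_1 - b_2 - c_2 - b_3 - a$; all six of its edges lie in $G$, its six vertices are distinct, and $a$ is its unique vertex in $A$. Distinct paths give distinct cycles (deleting $a$ and its two cycle edges recovers the path), so $G$ contains at least $2\,e(\mathcal{G}^{*})\ge 2(98n+k)\ge k$ of the desired $6$-cycles.

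The step I expect to be the real obstacle is choosing the right quantitative target: it is tempting to aim for a bound polynomial in $e(\mathcal{G})/n$ via convexity or Cauchy--Schwarz, but since $e(\mathcal{G})$ may be only $\Theta(n)$, the graph $\mathcal{G}$ behaves like a sparse graph --- a single high-degree vertex can dominate its higher moments --- so the best achievable bound is the linear one, of the form (number of $6$-cycles) $\gtrsim e(\mathcal{G})-O(n)$. Passing to the $3$-core before counting is exactly what makes a clean linear count go through, and the generous constant $100$ in the hypothesis merely pays for the $O(n)$ edges discarded in coring. (An equivalent route is iterative: find one such $6$-cycle, delete one of its edges of $\mathcal{G}$ --- this lowers $\sum_{c}|N_{a,c}|$ by exactly $1$ and leaves $\beta$ unchanged --- and repeat $k$ times; the $k$ cycles obtained are automatically distinct since the $j$-th one avoids all edges deleted in the first $j-1$ steps.)
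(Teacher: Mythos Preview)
Your proof is correct and follows essentially the same approach as the paper: both pass to the $3$-core of the graph of $B$--$C$ edges attached to $a$ and then exploit minimum degree $\ge 3$ to greedily build paths $b_1 - c_1 - b_2 - c_2 - b_3$ that close up through $a$ into the desired $6$-cycles. The paper extracts the $k$ cycles via exactly the iterative find-one/delete-an-edge/repeat argument you sketch at the end, whereas your main argument counts all length-$4$ paths in the $3$-core in one pass (incidentally giving the stronger bound $\ge 2(98n+k)$).
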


\begin{proof}\footnote{We thank an anonymous reviewer for suggesting this simpler proof; our original proof had to invoke a lemma from \cite{Yuster97FindEven} and \cite{bondy}.}
  For any fixed $a\in A$, we consider the subgraph $G_a$ of $G$ defined by taking the union of the edges in all paths $a - b - c$ where $c \in C$ and $b \in N_{a, c}$. Since these paths have distinct $(b,c)$ edges,  
  we know $|E(G_a)| \ge \sum_{c \in C} |N_{a, c}| \ge 100n + k$.
  In the following, we first prove that there exists at least one 6-cycle provided $|E(G_a)| \ge 100n$.

  We iteratively remove nodes in $G_a$ whose current degree is less than 3, and denote the resulting graph by $G'_a$, with number of edges $|E(G'_a)| \ge |E(G_a)| - 2n\ge 98n$.
  As shown in \cref{fig:graph}, $G'_a$ is composed of a bipartite graph between $V(G'_a)\cap B$ and $V(G'_a) \cap C$, and edges connecting $a$ with \emph{every} node in $V(G'_a) \cap B$.
   We know $a$ itself is not removed, since otherwise $|V(G'_a) \cap B| \le 2$ and the graph $G'_a$ can only have at most $\sum_{b\in V(G'_a) \cap B}\deg(b) \le 2n<98n$ edges. 
   
   Next, we find a 6-cycle in $G'_a$ of the form $a - b_1 - c_1 - b_2 - c_2 - b_3 - a$ in a greedy fashion: starting from an arbitrary $b_1\in V(G'_a)\cap B$, in each step we go to an arbitrary unvisited neighbor of the current node (which must exist since every node has degree $\ge 3$ in $G'_a$) as the next node on the cycle.
  Hence, we have found a 6-cycle containing $a$ in $G_a$.

  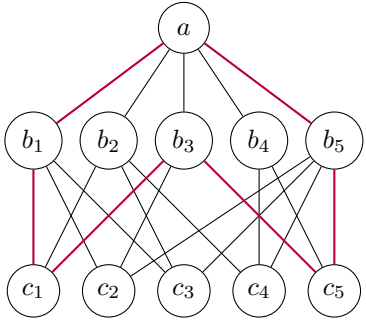
\begin{figure}[htbp]
    \centering
    \begin{tikzpicture}
      \foreach \i in {1,...,5} {
        \ifnum\i=1
          \node[draw, circle] (b\i) at (\i,1) {$b_\i$};
        \else
          \node[draw, circle] (b\i) at (\i,1) {$b_\i$};
        \fi
      }

      \node[draw, circle] (a) at (3,2.5) {$a\vphantom{b_i}$};

      \foreach \i in {1,...,5} {
        \draw (a) -- (b\i);
      }

      \foreach \i in {1,...,5} {
        \node[draw, circle] (c\i) at (\i,-1) {$c_\i$};
      }

      \draw (b1) -- (c1);
      \draw (b2) -- (c3);
      \draw (b3) -- (c5);
      \draw (b4) -- (c4);
      \draw (b5) -- (c2);
      \draw (b1) -- (c2);
      \draw (b2) -- (c4);
      \draw (b3) -- (c1);
      \draw (b4) -- (c5);
      \draw (b5) -- (c3);

      \draw (c1) -- (b2);
      \draw (c2) -- (b3);
      \draw (c3) -- (b1);
      \draw (c4) -- (b5);

      \draw[thick, purple] (a) -- (b1) -- (c1) -- (b3) -- (c5) -- (b5) -- (a);
    \end{tikzpicture}
    \caption{An example of $G'_a$. We are guaranteed that every node including $a$ has degree at least 3, so we can find a 6-cycle step-by-step (drawn in thick purple lines).}
    \label{fig:graph}
  \end{figure}

  Take the edge $(b_1,c_1)$ from the 6-cycle that we have just found, and remove $(b_1,c_1)$ from $G_a$, so that this 6-cycle no longer appears in the new $G_a$. 
  Then, the number of edges $|E(G_a)|$ decreases by 1; if it is still $\ge 100n$, we can repeat the above argument to find a second 6-cycle.
  This process is repeated for $k$ times as the initial $G_a$ has $100n + k$ edges, and $k$ cycles are reported in total, which proves the lemma.
\end{proof}

According to \cref{clm:saturation}, for any fixed $a\in A$, $\sum_{c \in C} |N_{a, c}| - 100n$ is a lower bound on the number of 6-cycles in $G$ in which $a$ is the only node in $A$. Thus, $\sum_{a \in A} \sum_{c \in C} |N_{a, c}| - 100n^2$ is a lower bound on the total number of 6-cycles in $G$. In other words, the total size of $N_{a, c}$ over all $a,c$ is $O(n^2 + t)$. 
A symmetric argument shows that the total size of $N_{b,d}$ over all $b,d$ is also $O(n^2+t)$.
Hence, the total size of all tables $N, P, Q, R$ is bounded by $O(n^2 + t)$. 
So our algorithm in stage 1 also takes $O(n^2+\textup{table size})\le O(n^2 + t)$ time.
This finishes the proof of \cref{lm:subroutine}.

\section{Algorithm for Listing $t$ 6-Cycles}
\label{sec:listt}

It remains to show how our algorithm for listing all 6-cycles (\cref{thm:all}) can be modified into an algorithm for listing $t$ 6-cycles in $\tilde O(n^2+t)$ time, for any given $t$.\footnote{A natural attempt is to subsample the edges of $G$ at some rate $p$ and obtain a subgraph $G'$ whose expected $C_6$ count (which equals $p^6$ times the original count) is close to $t$, and then apply \cref{thm:all} on $G'$. This does not directly work since the $C_6$ count of $G'$ can have high variance (for example, when some edge participates in many $C_6$s in $G$) and can be far from $t$ with very high probability.  Our approach here instead uses a binary search and is deterministic.}

Let the nodes of $G=(V,E)$ be $v_1,v_2,\dots,v_n$ in arbitrary order. Let $G_i$ be the subgraph of $G$ induced by the first $i$ nodes $v_1,\dots,v_i$.
Note that we can decide whether the number of 6-cycles in $G_i$ is at least $t$, in $\tilde O(|V(G_i)|^2 + t)\le \tilde O(n^2 + t)$ time: we simply attempt  to list all 6-cycles in $G_i$ using the algorithm from \cref{thm:all}, and we terminate the algorithm if the running time is too long. This allows us to use binary search to find the largest index $1\le i\le n$ such that the number of 6-cycles in $G_i$ is at most $t$, in $\tilde O(n^2 + t)$ time. Now there are three cases:
\begin{itemize}
\item Case 1: $i=n$. 

  This means the total number of 6-cycles in $G$ is at most $t$, and can be listed by \cref{thm:all} in $\tilde O(n^2+t)$ time.
\item Case 2: $i<n$.

  This means $G_{i+1}$ contains more than $t$ 6-cycles.
  We then decide whether $G_{i+1}$ has at most $2t$ 6-cycles, in $\tilde O(n^2+t)$ time. Based on the outcome there are two cases:
  \begin{itemize}
  \item Case 2(1):   $G_{i+1}$ has at most $2t$ 6-cycles.

    In this case we can list all  6-cycles in $G_{i+1}$ in $\tilde O(n^2 + t)$ time, and output $t$ of them as the answer.
  \item Case 2(2): $G_{i+1}$ has more than $2t$ 6-cycles.

    Since $G_i$ has at most $t$ 6-cycles, we know $v_{i+1}$ is contained in more than $2t-t=t$ 6-cycles.

    In this case, we simply apply the following standard lemma to $v_{i+1}$, and output $t$ 6-cycles containing $v_{i+1}$ in $\tilde O(m+t) = \tilde O(n^2+t)$ time. 
  \end{itemize}
\end{itemize}
\begin{lemma}[Simple adaptation of {\cite[Lemma 3.1]{Alon95ColorCoding}}]
  For fixed $k$, given an $n$-node $m$-edge graph $G$ with a special node $v$, and a parameter $t$, we can list $t$ $k$-cycles in $G$ containing $v$ in $O((m+t)\log n)$ time.
\end{lemma}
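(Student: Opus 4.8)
The plan is to upgrade the color-coding argument of Alon--Yuster--Zwick to an output-sensitive one. First I would restrict the graph to the connected component of $v$, so that $n \le m+1$, and then color each vertex of $G$ independently and uniformly with one of $k$ colors. Call a $k$-cycle \emph{colorful} if its vertices get $k$ distinct colors; a fixed $k$-cycle is colorful with probability $p \defeq k!/k^k$, a positive constant for fixed $k$. Reading a colorful $k$-cycle through $v$ starting from $v$ in the appropriate direction yields a simple path $v = u_1, u_2, \dots, u_k$ with $\{u_k,v\}\in E$ and with $\chi(u_1),\dots,\chi(u_k)$ a permutation of $\{1,\dots,k\}$; conversely any such path closes up to a colorful $k$-cycle through $v$, and colorfulness makes simplicity automatic. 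So, for a fixed coloring, it suffices to enumerate, for each of the $\le k!$ color orderings $(\chi(v)=c_1,c_2,\dots,c_k)$, all paths of this form, and each such cycle is then counted $O(1)$ times.

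For one fixed ordering I would build a layered DAG with $L_1 = \{v\}$ and, for $i\ge 2$, $L_i$ the set of vertices $u$ of color $c_i$ such that (i) $u$ is reachable from $v$ by a color-respecting path through $L_1,L_2,\dots$, and (ii) from $u$ one can reach a neighbor of $v$ by a color-respecting path through $L_{i+1},\dots,L_k$. Property (i) is found by a forward BFS from $v$ and property (ii) by a backward BFS seeded at $N_v$ intersected with color class $c_k$, each in $O(m)$ time, and I add an edge $x\to y$ for $x\in L_i,\ y\in L_{i+1}$ with $\{x,y\}\in E$. In this trimmed DAG every vertex lies on at least one complete $v$-to-$L_k$ path, so a depth-first traversal from $v$ that explores every forward edge enumerates exactly the desired paths; because traversing any edge produces at least one fresh complete path, this runs in $O(k + k\cdot(\#\text{paths produced}))$ time. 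I would run this traversal, aborting it the moment a global hash table of distinct cycles reaches size $t$, and also stopping each ordering once it has output $t$ paths. Either way one ordering costs $O(m+kt)$, and one \emph{round} (all $\le k!$ orderings under the current coloring) costs $O(m+t)$ for fixed $k$.

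Finally I would repeat the round $r = \Theta(\log n)$ times with fresh colorings, maintaining the global hash table and halting as soon as it has $t$ cycles. For correctness, observe that if any ordering in any round ever produces $t$ distinct colorful cycles then we already have $t$ distinct $k$-cycles through $v$ and halt correctly; hence in every round that does \emph{not} halt, each ordering runs to completion, so \emph{all} colorful cycles through $v$ in that round are inserted. A fixed $k$-cycle through $v$ is colorful in a given round with probability $\ge p$, so it is still missing after $r$ rounds with probability $\le (1-p)^r$; since $G$ has at most $n^k$ cycles through $v$ and we need only $\min\{t,\ \#(\text{cycles through }v)\}$ of them, a union bound shows $r=\Theta(\log n)$ rounds (with the hidden constant depending on $k$) collect enough of them with probability $1-1/\poly(n)$, in total time $O((m+t)\log n)$. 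To make this deterministic I would replace the random colorings by an $(n,k)$-perfect hash family of size $2^{O(k)}\log n$ as in \cite{Alon95ColorCoding}: every $k$-cycle is colorful for some coloring in the family, so one round per coloring lists all wanted cycles, still in $O((m+t)\log n)$ time for fixed $k$.

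The point that needs the most care is the interplay of output-sensitivity and early stopping: a round must cost $O(m+t)$ \emph{no matter how many} $k$-cycles pass through $v$. Trimming the layered DAG before the traversal is what prevents any effort being wasted on partial paths that cannot be completed, and capping each ordering at $t$ produced paths bounds the remainder --- and reaching that cap is itself a success, not a failure. The probability estimate and the derandomization are then standard color coding.
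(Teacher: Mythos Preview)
Your argument is correct and follows essentially the same approach as the paper's own (brief) proof sketch: color coding to reduce to a layered structure, a linear-time reachability computation to discard vertices that cannot lie on a complete $v$-to-$N(v)$ path, and then a DFS enumeration whose work is proportional to the number of paths produced. The paper phrases the DFS as running \emph{backward} from $N(v)\cap V_{k-1}$ along forward-reachable parent pointers, whereas you run it \emph{forward} from $v$ after a bidirectional trim, but these are symmetric. You are also more explicit than the paper about early stopping, about deduplicating across the $O(\log n)$ colorings, and about why capping each round at $t$ outputs is safe; the paper's sketch elides these points but they are needed for the stated bound, so your added care is appropriate rather than a deviation.
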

\begin{proof}[Proof Sketch]
  Using the color coding technique \cite{Alon95ColorCoding}, we can assume the nodes in $G$ are partitioned as $V(G) = \{v\}\sqcup V_1\sqcup V_2 \sqcup \dots \sqcup V_{k-1}$, and we only need to list $k$-cycles with nodes $v,v_1\in V_1,\dots,v_{k-1}\in V_{k-1}$ in order. By a dynamic-programming-like procedure in $O(m)$ time, we can compute all the vertices $v_i\in V_i$ that are reachable from $v$, as well as all their parents $v_{i-1}\in V_{i-1}$ reachable from $v$ and adjacent to $v_i$. To report $k$-cycles, we start with every neighbor of $v$ in $V_{k-1}$, and follow the parent pointers to go back to $v$ using DFS. In this way we output $t$ $k$-cycles in $O(kt) = O(t)$ time.
\end{proof}

\bibliographystyle{alpha}
\bibliography{reference}

\end{document}